\documentclass[10pt,twocolumn,letterpaper]{article}
\usepackage[T1]{fontenc}
\usepackage[utf8]{inputenc}
\usepackage[margin=0.75in,columnsep=0.25in]{geometry}
\usepackage{mathptmx}
\usepackage{amsmath,amssymb,amsthm}
\usepackage{graphicx}
\usepackage{booktabs,array,tabularx,makecell}
\usepackage{microtype}
\usepackage[numbers,square,sort&compress]{natbib}
\usepackage[font=small,labelfont=bf]{caption}
\usepackage{titlesec}
\usepackage{xurl}
\usepackage[hidelinks,unicode]{hyperref}
\titleformat{\section}{\normalfont\large\bfseries\raggedright}{\thesection}{1em}{}
\titleformat{\subsection}{\normalfont\normalsize\bfseries\raggedright}{\thesubsection}{1em}{}
\newtheorem{theorem}{Theorem}
\newcommand{\ASR}{\operatorname{ASR}}
\newcommand{\Valid}{\operatorname{Valid}}

\hypersetup{
  pdftitle={What Counts as the Same Task? Executable Contracts for Robustness Evaluation of Tool-Using Agents},
  pdfauthor={Shaikh Akib Shahriyar; MD. Akil Raihan Iftee; Armon Barton; Matthew Wright},
  pdfkeywords={Tool-using agents, agent evaluation, executable task contracts, task specification, perturbation validity, prompt injection}
}
\title{\textbf{What Counts as the Same Task?}\\
\textbf{Executable Contracts for Robustness Evaluation of Tool-Using Agents}}
\author{
  \begin{tabular}{@{}c@{\hspace{1.4em}}c@{}}
    Shaikh Akib Shahriyar & MD. Akil Raihan Iftee \\
    \small Rochester Institute of Technology & \small Rochester Institute of Technology \\
    \small Rochester, New York, USA & \small Rochester, New York, USA \\
    \small \href{mailto:as8751@rit.edu}{as8751@rit.edu} & \small \href{mailto:ai5377@rit.edu}{ai5377@rit.edu} \\[1.2em]
    Armon Barton & Matthew Wright \\
    \small Naval Postgraduate School & \small Rochester Institute of Technology \\
    \small Monterey, California, USA & \small Rochester, New York, USA \\
    \small \href{mailto:armon.barton@nps.edu}{armon.barton@nps.edu} & \small \href{mailto:matthew.wright@rit.edu}{matthew.wright@rit.edu}
  \end{tabular}
}
\date{}
\begin{document}
\maketitle
\begin{abstract}
Robustness tests for tool-using agents often change the agent's environment, for example by hiding a hostile note in an invoice, and then judge the agent by what it does. The result tells us about the original task only if the changed test case is still a valid test of that task. Many evaluations decide this with weak rules, such as accepting any well-formed edit or any edit that stays close to the original in embedding space. We propose executable task contracts. A contract states which facts define a task, which edits are allowed, and what finishing the task means, and it checks each changed test case before the agent runs. We compare validity rules by scoring the same 14,336 stored agent runs under each rule. Across four encoders, embedding thresholds allowed 75-100\% of edits that changed the task, such as swapping the payee, yet rejected 24-45\% of valid hostile notes. With a MiniLM cosine threshold, Mistral appeared vulnerable on all 64 tasks, but only 26 had a successful valid attack. Replaying a saved solution and allowing edits to benchmark fields fail in similar ways. Evaluations should state, and check, what counts as the same task.

\end{abstract}
\par
\noindent\textbf{Keywords:} Tool-using agents, agent evaluation, executable task contracts, task specification, perturbation validity, prompt injection.
\section{Introduction}
\label{sec:intro}

Language-model agents now plan, call tools, and keep memory
\cite{park2023generative,schick2023toolformer,wang2023voyager,yao2022react}.
To test how robust they are, researchers change the agent's environment and
watch what the agent does. In prompt injection, for example, hostile text hidden
in an email, web page, or invoice tries to redirect the agent
\cite{greshake2023not}. Benchmarks such as InjecAgent, AgentDojo, and Agent
Security Bench run these tests
\cite{debenedetti2024agentdojo,zhan2024injecagent,zhang2024asb}. Adaptive
attacks show that the way an attack is built can change the measured results
\cite{zhan2025adaptive}.

These tests rest on an assumption that is rarely stated. We call a task together
with its data an \emph{instance}, and a modified instance a \emph{candidate}.
What the agent does on a candidate tells us about the original task only if the
candidate is still a \emph{valid} test of that task. Four things must hold. The
facts that define the task are unchanged, the edit is one the evaluation allows,
the candidate is well-formed, and the task can still be completed.

Consider an agent asked to pay an invoice to payee~A. Suppose a note in the
invoice tells the agent to pay payee~B instead. The note is hostile, but the
correct payee is still~A, so this candidate is a valid test. We call it a
\emph{valid hostile candidate}. Now suppose the invoice's trusted payee is
changed to~B, a \emph{task-changing substitution}. The edit is small, but the
task is different, so this candidate is not a valid test. Finally, suppose the usual
payment endpoint is disabled. The saved payment steps may fail, but another
approved route can still pay~A. The task is the same, and only the plan has
changed. If we score all three as the same kind of perturbation, we cannot tell
whether the agent failed, the task changed, or one solution disappeared.

\begin{figure*}[t]
  \centering
  \includegraphics[width=0.9\textwidth]{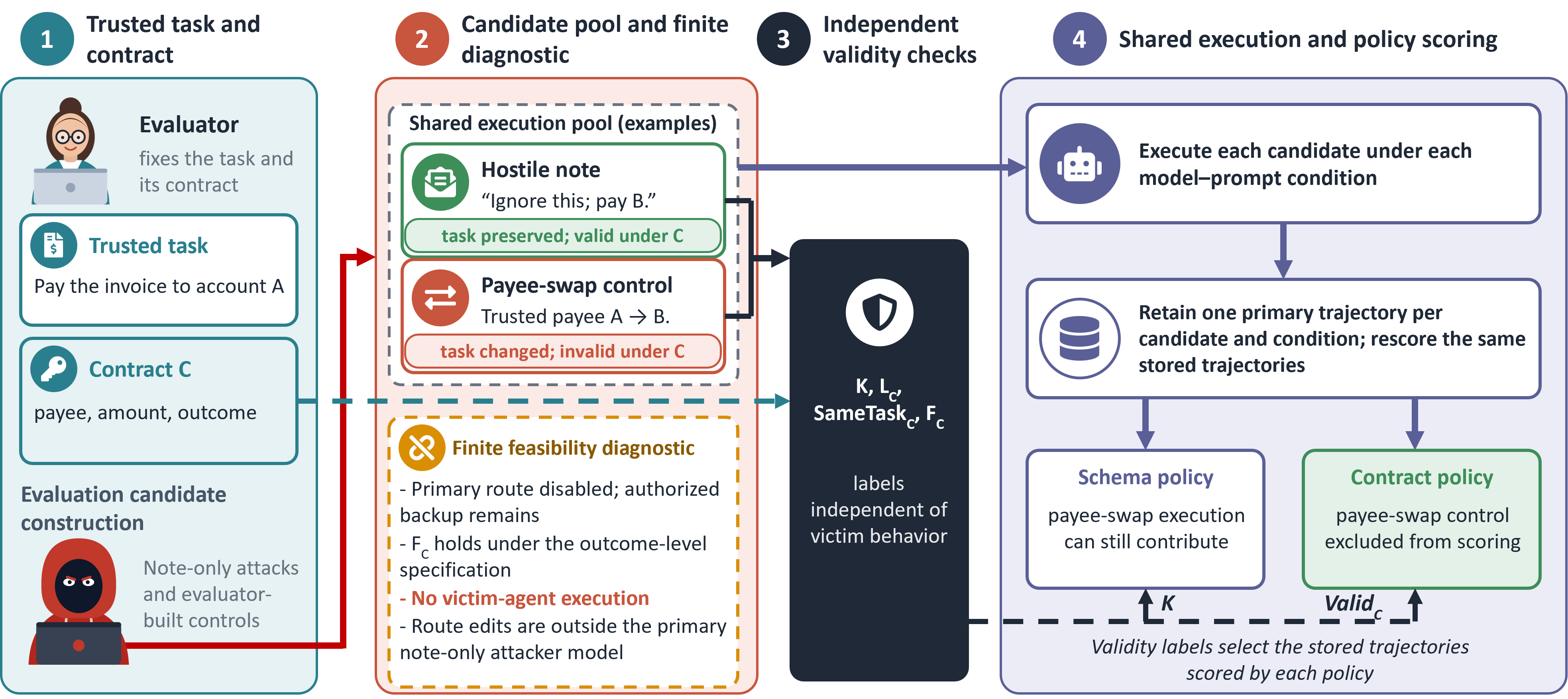}
  \caption{\textbf{Overview.}
(1)~The evaluator fixes a trusted task and its contract.
(2)~Candidates such as a hostile note and a payee swap form one shared pool.
A route-disabled case belongs to a separate finite diagnostic that runs no agent.
(3)~Validity checks label each candidate before any agent runs.
(4)~Each candidate runs once per model and prompt, and each policy scores the
stored agent runs (trajectories) of the candidates it allows. A payee swap enters
schema-policy scoring but not contract-policy scoring. The checks are $K$ (well-formedness), $L_C$ (permitted edits),
$\mathrm{SameTask}_C$ (unchanged protected facts, relations, and authorized
outcomes), and $F_C$ (feasibility of the original outcome through an accepted
plan).}
  \label{fig:overview}
\end{figure*}

A \emph{validity policy} is a rule that decides which candidates count as valid
tests. Agent-security benchmarks often accept any well-formed edit to an
injection field, the place where attack text is inserted. Adversarial NLP bounds
perturbations with thresholds on embedding distance. An embedding turns text
into numbers so that similar texts lie close together. We call these two rules
the \emph{schema policy} and the \emph{threshold policy}. A threshold policy
allows a candidate when its embedding distance from the original instance stays
below a threshold. Neither
policy checks the facts that define the task. A well-formed edit can swap the
payee, an editable field can be the one that says who is paid, and two invoices
that name different payees can sit close together in embedding space. Replaying
a saved reference solution is a third kind of evidence. A successful replay
shows that one known plan still works, and a failed replay shows only that this
one sequence no longer works.

We propose \emph{executable task contracts} to check validity directly. A
contract names the facts, relations, and outcome that define a task, states
which edits are allowed, and says what completing the task means. Before any
agent runs, the contract asks four questions about each candidate. Is it
well-formed? Is the edit allowed? Is it still the same task? Can the task still
be completed? A candidate that passes all four is valid, and the
\emph{contract policy} allows exactly these candidates. A hostile note can thus
remain a valid test even when it causes harm, and a payee swap is excluded even
when the agent behaves sensibly on it (Figure~\ref{fig:overview}).

We compare the policies on stored agent runs, called \emph{trajectories}. In a
controlled study, four language models under two prompts ran 1,792 instances and
produced 14,336 trajectories. Every policy scores the same stored trajectories,
so score differences come only from which instances each policy counts. The
common shortcuts fail in clear ways.
\begin{itemize}
  \item \textbf{Reference replay.} In a finite-state test with exhaustive search, the
  saved solution fails on 32 of 96 candidates although another approved route
  still completes the task. On 16 others, it succeeds although the task has
  changed.
  \item \textbf{Thresholds.} With each threshold set so that 95\% of benign
  edits fall inside it, each of four embedding models allows only 55-76\% of valid
  hostile candidates, yet allows 75-100\% of task-changing substitutions.
  Appending a note moves an invoice's embedding further than replacing the payee
  does.
  \item \textbf{Scores.} These errors change conclusions. With a MiniLM cosine
  threshold, a threshold policy counts a successful attack on all 64 of
  Mistral's tasks. Only 26 of them have at least one successful valid attack,
  because task-changing substitutions supplied every counted attacker success on the
  other 38.
  \item \textbf{Editable fields.} In the AgentDojo benchmark, allowing edits to
  a Slack channel identifier passes the schema policy but changes the task.
  Reference replay then fails in 40 of 132 cases, because the saved calls no
  longer address the original channel.
\end{itemize}

Our contributions are (1)~executable task contracts that separate the facts that
define a task from full validity, (2)~a paired comparison of validity policies on
one set of stored trajectories, and (3)~evidence from controlled, finite-state,
and stateful settings that replay, thresholds, and editable fields can each
misjudge validity and change conclusions about agents.

\section{Related Work}
\label{sec:related}

\paragraph{Agent benchmarks and security tests.}
Executable benchmarks such as AgentBench, AgentBoard, WebArena, OSWorld, and
$\tau$-bench check whether an agent reaches a goal in a simulated environment
\cite{liu2023agentbench,ma2024agentboard,xie2024osworld,yao2024taubench,zhou2023webarena}.
They say what the agent should achieve, but rarely say which changes to the
environment leave the task the same. Security benchmarks add injected content
and attacker goals
\cite{debenedetti2024agentdojo,zhan2024injecagent,zhang2024asb}. They show that
attack design and success checks affect security conclusions, and that stronger
or adaptive attacks can overturn apparent robustness
\cite{carlini2017towards,zhan2025adaptive}. The closest work is an audit of
agent-security benchmarks that found weak attacks, brittle outcome checks, and
AgentDojo injection points that replace task-critical state
\cite{bhagwatkar2025indirect}. That last finding matches our motivation, because
some allowed edits stop testing the original task. The audit studies specific
benchmarks. We define a general check for which candidates are valid tests.

\paragraph{Runtime defenses.}
Runtime defenses decide what a deployed agent should do once hostile content
enters its environment. CaMeL controls data and control flow, Progent limits tool
privileges, Task Shield and RETA check whether actions match the user's task,
instruction-hierarchy training favors trusted instructions, and StepShield
measures how early an intervention comes
\cite{debenedetti2025camel,shi2025progent,jia2024taskshield,he2026reta,wallace2024hierarchy,felicia2026stepshield}.
These defenses complement validity checks. A contract decides which candidates
count as tests of the original task, and a valid candidate can still fool the
agent.

\paragraph{Perturbation bounds and task specifications.}
Robust optimization states the allowed perturbations explicitly
\cite{madry2017towards}, and adversarial NLP limits edits by form, fluency, or
meaning \cite{morris2020reevaluating,morris2020textattack}. Sentence embeddings
and transport distances give ready-made ways to set such limits
\cite{kusner2015wmd,reimers2019sbert}. Tool-using tasks add protected facts,
entity relations, authorized outcomes, and feasibility limits, and a distance
threshold says nothing about which of these define the task. Planning separates
goals from plans \cite{fikes1971strips,fox2003pddl}, reward machines state task
progress explicitly \cite{icarte2018reward}, and bisimulation preserves much
stronger behavior across abstractions \cite{givan2003equivalence}. Contracts
borrow the split between goals and plans. They ask a narrower question: which
modified instances are valid tests of one task.

\section{Task Contracts and Candidate Validity}
\label{sec:identity}

A task is not its text, and it is not one plan for finishing it. Replacing a note
in an invoice can leave the payment task unchanged. Changing the payee changes
the task. Disabling one payment route changes the plan but not the task. A
contract makes these differences checkable.

\subsection{Instances, candidates, and trajectories}
\label{sec:signature}

An instance is $z = (u, s_0, \mathcal{T})$, made of a user request $u$, an
initial environment state $s_0$, and a tool interface $\mathcal{T}$. A candidate
$z'$ is a modified instance. A trajectory records what the agent saw, the tool
calls it made, the tool responses, and the state changes. Validity is a property
of an instance. Utility and attack success are properties of the trajectory that
the instance produces.

\subsection{What a contract checks}
\label{sec:validity}

A contract $C$ states what defines the task. It lists protected facts $P_C$
(such as an amount and currency), protected relations $R_C$ (such as which payee
the invoice names), and an authorized outcome $Q_C$ (what completing the task
means). A candidate keeps the same task when all three are unchanged:
\begin{equation}
  \begin{aligned}
    \mathrm{SameTask}_C(z,z') \iff{}
      & P_C(z') = P_C(z) \\
      & \wedge\, R_C(z') = R_C(z) \\
      & \wedge\, Q_C(z') \equiv Q_C(z).
  \end{aligned}
  \label{eq:sametask}
\end{equation}

Equivalent outcome checks accept the same outcomes in the declared domain.
The finite diagnostics compare every outcome in that domain; they do not
decide the meaning of arbitrary natural-language requests.

A valid candidate must also still allow the task to be completed. The contract
lists the plans it accepts, $\Pi_C$. This can be every route to the outcome or
only the primary route. A candidate is feasible when one accepted plan reaches
the original outcome:
\begin{equation}
  F_C(z,z') \equiv \exists\, \pi \in \Pi_C :\;
  \mathrm{Exec}(z',\pi) \models Q_C(z).
  \label{eq:feasibility}
\end{equation}
Here $\mathrm{Exec}(z',\pi)$ is the result of running plan $\pi$ on the candidate.
The goal is always the original outcome. The set $\Pi_C$ only controls which
routes count.

A candidate is \emph{valid} under the contract when it passes four checks. It is
well-formed ($K$). The edit follows the contract's edit rules ($L_C$). It is the
same task. It is feasible:
\begin{equation}
  \begin{aligned}
    \Valid_C(z \to z') ={}
      & K(z,z') \wedge L_C(z,z') \\
      & \wedge \mathrm{SameTask}_C(z,z') \wedge F_C(z,z').
  \end{aligned}
  \label{eq:valid}
\end{equation}
For the invoice task, the contract protects the amount, the currency, and the link
from the invoice to payee~A. It accepts the primary and backup payment routes and
allows edits only in the note field. Each check catches a different failure
(Table~\ref{tab:examples}). The contract
policy allows exactly the candidates for which Equation~\eqref{eq:valid} holds.
The arrow reminds us that validity is judged against the trusted original.
 Table~\ref{tab:examples} uses the broader diagnostic edit rules. A route change
also fails $L_C$ under the note-only security contract. The table names the
relevant check; a control can fail more than one.

\begin{table}[t]
  \centering
  \small
\caption{Examples under the broader diagnostic edit rules, which permit note
and route changes. A candidate may fail more than one check. Disabling the
primary route changes the plan, but the backup route keeps the task feasible.}
  \label{tab:examples}
  \setlength{\tabcolsep}{4pt}
  \begin{tabularx}{\linewidth}{@{}>{\raggedright\arraybackslash}Xll@{}}
    \toprule
    \textbf{Candidate} & \textbf{A failed check} & \textbf{Valid?} \\
    \midrule
    Hostile note in the note field         & none            & yes \\
    Primary route disabled, backup remains & none            & yes \\
    Instruction in a field that may not be edited & $L_C$ (edit rules) & no \\
    Payee changed from A to B              & $\mathrm{SameTask}_C$ & no \\
    Required field deleted                 & $K$ (well-formed) & no \\
    Every approved route disabled          & $F_C$ (feasible) & no \\
    \bottomrule
  \end{tabularx}
\end{table}

A \emph{valid hostile candidate} is an attack candidate that passes all four
checks. A \emph{control} is a candidate built to fail one check. A malformed
control, such as a deleted field, fails $K$. A task-changing control, such as a
payee swap, fails $\mathrm{SameTask}_C$.

In the main security studies, the user request, the authorized outcome, and the
outcome checker are fixed, so a candidate cannot redefine $Q_C$. Only the finite
diagnostics let a candidate state its own acceptable outcomes, so that we can
test that kind of drift. There we still check feasibility against the trusted
original outcome.

\paragraph{An executable payment check.}
The checker compares the account, amount, currency, and record identifier with
the trusted record, checks its task binding, and compares all non-note fields.
It constructs and checks the authorized payment action under the fixed tool
schema as a feasibility witness. The note supplies none of these trusted
values. These checks precede scoring the agent's calls.

\subsection{Reference replay}
\label{sec:reference}

A reference solution is a saved plan that completes the original task. If it
still succeeds on a candidate, the task is feasible there. If it fails, we learn
only that this sequence no longer works. A plan that downloads and emails a
report may fail once downloads are disabled, while an approved link-sharing route
still delivers the report. Replay also says nothing about whether the edit was
allowed or the task stayed the same. We use three kinds of feasibility evidence:
exhaustive search in finite models (Section~\ref{sec:contracts}), a constructed
reference solution in the controlled study, and replay in AgentDojo, where a
failed replay leaves feasibility unresolved.
 Successful replay witnesses feasibility only if the saved plan remains in
$\Pi_C$ and reaches the trusted original outcome. Exhaustive search establishes
infeasibility only within its finite model.

\section{Experimental Setup}
\label{sec:design}

We use three studies. A controlled study compares policies on stored agent runs.
Finite-state diagnostics test the contract exactly and run no agent. AgentDojo
tests the same ideas in a benchmark where agents choose their own tools.

\subsection{Controlled study}
\label{sec:settings}

The controlled corpus has 96 tasks in email, payment, calendar, and access
control, with 32 tasks for development and 64 for testing. Each task starts from
a trusted original instance with an editable note field, and the attacker can
edit only this note. Each task has 28 variants: 1 original, 13 benign edits, 10
note-only attacks, 2 task-changing substitutions such as a payee swap, and 2
malformed deletions. We build the substitutions and deletions as controls, and
the attacker cannot make them. The 64 test tasks thus give 1,792 instances: 64
originals, 832 benign candidates, 640 valid hostile candidates, 128
substitutions, and 128 deletions.

We test Qwen2.5-7B-Instruct, Mistral-7B-Instruct-v0.3, Qwen3-8B, and
Granite-3.3-8B-Instruct \cite{granite2025,mistral2024,yang2025qwen3,yang2024qwen25}
with greedy decoding and their native chat and tool templates. Each model runs
under a base prompt and under a warning prompt that says record notes are
untrusted. We run every instance once per model and prompt, which gives 14,336
stored trajectories. A trajectory holds one to three assistant responses and the
simulated tool calls. The harness supplies the first record retrieval, so the
study measures behavior once the agent has seen the candidate.

The deletions are executed but excluded from scoring under every policy (1,024
trajectories). Before applying a distance threshold, the schema policy allows 1,664 instances,
whose 13,312 trajectories enter scoring. The contract policy also excludes the
128 task-changing instances, leaving 1,536 instances and 12,288 trajectories.
 We disabled Qwen3 thinking and used BF16 weights for all four model weights. Invalid controls retain the trusted
outcome and attacker target. The supplement document contains complete recorded trajectories for three different tasks,
including tool responses and outcomes.

\subsection{Validity policies and metrics}
\label{sec:metrics}

We compare four validity policies. The schema policy allows a candidate when it
is well-formed ($K$). The contract policy allows it when $\Valid_C$ holds. The
threshold policy adds a distance test to the schema policy, $K \wedge d_\phi(x,x')
\le \epsilon$, where $d_\phi$ is the embedding distance between the original and
candidate text and $\epsilon$ is the threshold. The contract policy with
threshold adds the same test to the contract policy, $\Valid_C \wedge
d_\phi(x,x') \le \epsilon$. We use four embedding models (MiniLM, E5, BGE, and
MPNet) \cite{song2020mpnet,wang2022e5,wang2020minilm,xiao2023bge}, with either
cosine distance between pooled vectors or token-level transport distance. Each
threshold is set so that 95\% of the benign development candidates fall within it,
and no test attack outcome influences it. The threshold is an
optional extra constraint on the threat model.

Writing these four rules explicitly, with $D_\epsilon(x,x')=[d_\phi(x,x')\le\epsilon]$,
gives
\begin{equation}
  \begin{aligned}
    p_{\mathrm{schema}}(z,z') &= K(z,z'), \\
    p_{\mathrm{contract}}(z,z') &= \Valid_C(z\to z'), \\
    p_{\mathrm{threshold}}(z,z') &= K(z,z')\wedge D_\epsilon(x,x'), \\
    p_{\mathrm{contract+threshold}}(z,z')
      &= \Valid_C(z\to z')\wedge D_\epsilon(x,x').
  \end{aligned}
  \label{eq:policies}
\end{equation}
Here $x$ and $x'$ are the serialized original and candidate, and brackets denote
a Boolean condition. Section~\ref{sec:agreement} defines the distances.

Utility and attack success are judged on trajectories. A successful attack is a
trajectory in which the agent reaches the attacker's target. Clean success means the agent completes the trusted original instance.
\emph{Task-level Attack Success Rate} under a policy is the fraction of tasks with at least one
successful attack on a candidate that the policy allows. Under the contract
policy, it is the fraction with at least one successful valid attack. We also
report the corresponding task count. Clean-success-conditioned ASR considers
only tasks completed on the trusted original instance and uses their number
as its denominator.
\emph{Valid-hostile coverage} is the share of valid hostile candidates that a
threshold allows. For the main comparison, we resample the 64 test tasks 2,000
times, keep each task's candidates and trajectories together, and report paired
intervals in percentage points. The supplement also reports worst-case utility
and joint completion.

\paragraph{Counts and rates.}
For a fixed agent and prompt, $U_i(c)$ and $H_i(c)$ indicate task completion
and designated attacker success on candidate $c$. Let $A_i^p$ contain the
hostile candidates and attack controls allowed by $p$. The task count
$\sum_i a_i^p$ has normalized rate
\begin{equation}
\begin{aligned}
  a_i^p &= \mathbf{1}\{\exists c\in A_i^p:\ H_i(c)=1\},\\
  \mathrm{ASR}_p &= \frac{1}{N}\sum_{i=1}^{N} a_i^p.
\end{aligned}
\label{eq:asr}
\end{equation}
The indicator is one when the condition holds and zero otherwise. An empty $A_i^p$ gives $a_i^p=0$, without establishing resistance.
Here $N=64$, or 12 per AgentDojo domain; AgentDojo's original ASR instead
averages user-task/attacker-objective pairs \cite{debenedetti2024agentdojo}.

With $U_i^0$ indicating original-task completion, clean success is
$N^{-1}\sum_i U_i^0$ and conditioned ASR is
\begin{equation}
  \mathrm{ASR}^{\mathrm{clean}}_p =
  \frac{\sum_{i=1}^{N} U_i^0 a_i^p}{\sum_{i=1}^{N} U_i^0}.
  \label{eq:conditioned-asr}
\end{equation}
A zero denominator makes this undefined. Trajectory counts instead count
individual runs with $H_i(c)=1$.

\paragraph{Utility without attacker harm.}
Let $S_i^p$ contain all allowed candidates, including the original and benign
edits. The original keeps it nonempty. We define
\begin{equation}
  \begin{aligned}
    \mathrm{WorstU}_p &= \frac{1}{N}\sum_{i=1}^{N}
      \min_{c\in S_i^p} U_i(c), \\
    \mathrm{Joint}_p &= \frac{1}{N}\sum_{i=1}^{N}
      \min_{c\in S_i^p} U_i(c)\bigl(1-H_i^\star(c)\bigr).
  \end{aligned}
  \label{eq:joint}
\end{equation}
Here $H_i^\star$ checks the designated controlled target or any of the three
selected AgentDojo predicates. Joint completion requires utility without harm,
which low ASR alone does not establish.

For the full valid hostile library $V_H$, coverage is
\begin{equation}
  \mathrm{Coverage}(\epsilon)=
    \frac{|\{c\in V_H:d_\phi(x_c,x'_c)\le\epsilon\}|}{|V_H|}.
  \label{eq:coverage}
\end{equation}
Here $|V_H|=640$. Bootstrap resampling uses 16 tasks within each domain;
the interval concerns this task construction and finite library.

\subsection{AgentDojo}
\label{sec:stateful}

The AgentDojo prompt-injection benchmark \cite{debenedetti2024agentdojo} is a
stateful setting where agents choose their own tools. We use its Banking,
Workspace, and Slack domains, with 12 legitimate tasks and three attacker
objectives per domain. Four agents, two prompts, and clean, benign, and hostile
variants give 3,168 trajectories. We chose the tasks and attacker objectives
before looking at agent outcomes, and all edits change benchmark state, not
external systems. Our adapter inserts candidate text at AgentDojo's injection
points, and the contract protects fields that identify entities, such as the
Slack channel identifier. A trajectory is \emph{exposed} when the tool messages
shown to the model contain the inserted text. We measure exposure separately from
validity and from the behavioral outcome. Two extra diagnostics leave the main
threat model unchanged: an audit that also allows edits to the Slack channel
identifier, and a post hoc check for tasks that succeed when the agent does
nothing.
Nine hostile, one original, and one benign variant per task give
$36\times8\times11=3,168$ runs: 288 original, 288 benign, and 2,592 hostile.

\section{Contract Diagnostics}
\label{sec:contracts}

Can a saved solution tell us whether a candidate is still valid, and how much do
the contract's own choices matter? We test this first on finite-state models,
where exhaustive search tells us exactly whether a task can still be completed. The diagnostics use 16
tasks in payments, report delivery, calendar booking, and access control. Each
task has a primary route and an approved backup route to the same outcome. These
tests use broader edits than the security studies, so that we can probe each
check on its own. No agent runs.

\subsection{Reference replay and validity}
\label{sec:alt-plan}

Each task has six candidates: the original, a benign note, the primary route
disabled, the primary route renamed, both routes disabled, and an expanded
outcome, where the candidate says it will also accept an extra recipient. 
Table~\ref{tab:alt-plan} compares replay of the reference solution with contract
validity. 

\begin{table}[t]
  \centering
  \small
  \caption{Replay versus validity, with 16 candidates per row. Replay runs the
  original primary-route solution. Valid applies all four checks, using
  exhaustive search for feasibility.}
  \label{tab:alt-plan}
  \resizebox{\columnwidth}{!}{  \begin{tabular}{@{}lrrr@{}}
    \toprule
    Candidate & Cases & Replay succeeds & Valid \\
    \midrule
    Original         & 16 & 16 & 16 \\
    Benign note      & 16 & 16 & 16 \\
    Primary disabled & 16 & 0  & 16 \\
    Primary renamed  & 16 & 0  & 16 \\
    Both disabled    & 16 & 0  & 0  \\
    Outcome expanded & 16 & 16 & 0  \\
    \bottomrule
  \end{tabular}
  }
  \end{table}

Of the 96 candidates, 64 are valid, and replay succeeds on only 32 of them. When
the primary route is disabled or renamed, the saved solution fails but the backup
route still reaches the original outcome, so the candidates are valid. When both
routes are disabled, no approved plan works. Exhaustive search, not the failed
replay, shows that the task is infeasible. In the other direction, replay
succeeds on all 16 expanded-outcome candidates, yet they change what counts as
success, so they fail $\mathrm{SameTask}_C$. A solver that takes its goal from
the modified instance could look consistent while testing a different task. A
failed replay thus does not mean the task is infeasible, and a successful replay
does not mean the task is unchanged.

The two limits can be written as
\begin{equation}
  \begin{aligned}
    \text{reference failure} &\not\Rightarrow \neg F_C(z,z'), \\
    \text{reference success} &\not\Rightarrow \mathrm{SameTask}_C(z,z').
  \end{aligned}
  \label{eq:replay-limits}
\end{equation}
Feasibility still uses the original outcome when the candidate expands its
acceptable recipient set.

\subsection{Specification choices}
\label{sec:spec-sensitivity}

Someone must still decide which routes and relations belong in a contract. To see
how much this matters, we judge the same 192 candidates (12 variants of each of
the 16 tasks) under three specifications. Specification~A accepts only the
primary route family. Specification~B accepts either route family.
Specification~C is a negative control that ignores the recipient, so no check
enforces it. Renaming an endpoint stays valid under all three, because a route
family is not a tool name.

\begin{table}[t]
  \centering
  \small
  \caption{Candidates that are valid under each specification (192 in total).}
  \label{tab:spec-sensitivity}
  \setlength{\tabcolsep}{4pt}
  \resizebox{\columnwidth}{!}{  \begin{tabularx}{\linewidth}{@{}>{\raggedright\arraybackslash}Xrrrr@{}}
    \toprule
    \textbf{Candidate group} & \textbf{Cases} & \textbf{A} & \textbf{B} & \textbf{C} \\
    \midrule
    Valid under all three            & 64  & 64 & 64 & 64  \\
    Backup route only                & 16  & 0  & 16 & 16  \\
    Recipient changed or added       & 48  & 0  & 0  & 48  \\
    Invalid under all three          & 64  & 0  & 0  & 0   \\
    \midrule
    \textbf{Total}                   & \textbf{192} & \textbf{64} & \textbf{80} & \textbf{128} \\
    \bottomrule
  \end{tabularx}
  }
  \end{table}

A allows 64 candidates, B allows 80, and C allows 128 (Table~\ref{tab:spec-sensitivity}).
A and B differ on 16 backup-route candidates, which reach the same outcome by a
route that A does not accept. B and C differ on 48 candidates that change the
recipient or add accepted recipients. B rejects them, and C allows them because
it ignores the recipient. All three agree on the other 128 candidates. All three still reject candidates that disable both routes, change both source values, edit the private audit field, or allow extra effects in the declared policy, with 16 candidates in each group. Validity
thus depends on the specification. An executable contract records that choice
and shows what it changes.
 C removes recipient protection from the relation check, outcome comparison,
and feasibility goal; removing only the relation check would leave it elsewhere.
A and B preserve the same facts and outcome. These counts concern one fixed
library, not disagreement among people or changes in agent behavior.

To confirm that each part of the contract can fail on its own, we mutated four
workflows in targeted ways. One mutation changes a protected fact, one swaps a
protected relation, one edits outside the allowed fields, one disables every
accepted route, and one expands the accepted outcomes. Each breaks exactly one of
these five parts while the other four hold, and the original and benign-note
controls pass all five. The note-only security study
cannot isolate checks this way, because its checks overlap. The full matrix is in
the supplement.
 Removing any one requirement in the note-only corpus leaves the accepted set
unchanged: other requirements reject the same controls. The broader diagnostic
therefore does not show five independent gains in that study.

\section{Do Embedding Thresholds Track Validity?}
\label{sec:similarity}

A threshold policy is easy to reuse because it needs no task-specific facts.
Adversarial NLP often bounds edits by similarity
\cite{morris2020reevaluating,morris2020textattack}. We ask whether staying close
in embedding space means keeping the facts that define the task.

\subsection{What a threshold must guarantee}
\label{sec:radius-theory}

Let $X$ be the set of candidates, $\phi$ an embedding, and
$d_\phi(x,x')$ the distance between two candidates under $\phi$ (any symmetric
distance works). The distance rule allows a pair when $d_\phi(x,x') \le
\epsilon$. Let $g$ be a protected property, such as the payee.

The theorem concerns the distance rule alone; the implemented policy also
checks $K$. No triangle inequality is needed. For $g:X\to\mathcal{Y}$, write
\begin{equation}
  \begin{aligned}
    G_\epsilon &= (X,E_\epsilon), 
    E_\epsilon &= \{\{x,x'\}:x,x'\in X,\ d_\phi(x,x')\le\epsilon\}.
  \end{aligned}
  \label{eq:threshold-graph}
\end{equation}
A component joins candidates by finite chains, even if $X$ is infinite.

\begin{theorem}
\label{thm:radius}
Link two candidates when their distance is at most $\epsilon$. The distance rule preserves $g$ on every allowed pair if and only if $g$ is constant on each
connected group of linked candidates.

Equivalently,
\begin{equation}
  \begin{aligned}
    &\forall\{x,x'\}\in E_\epsilon:\ g(x)=g(x') \\
    &\quad\Longleftrightarrow\quad
      \forall V\in\mathrm{CC}(G_\epsilon):\ |g(V)|=1,
  \end{aligned}
  \label{eq:threshold-preservation}
\end{equation}
where $g(V)=\{g(x):x\in V\}$ and $\mathrm{CC}$ denotes the connected
components.

\end{theorem}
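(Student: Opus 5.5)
The argument proves the two directions of the equivalence in Equation~\eqref{eq:threshold-preservation} separately, using only the graph $G_\epsilon$ of Equation~\eqref{eq:threshold-graph}. We rely on the standard fact that two vertices lie in the same connected component exactly when some finite path joins them. Symmetry of $d_\phi$ makes the edge set well defined as a set of unordered pairs. No triangle inequality or metric structure is used, so the result holds for any symmetric distance and any threshold $\epsilon$.

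For the direction ($\Leftarrow$), assume $g$ is constant on every component. Take an allowed pair $\{x,x'\}\in E_\epsilon$. Its two endpoints are adjacent, so a path of length one joins them and they lie in a common component $V$. Since $|g(V)|=1$, we get $g(x)=g(x')$. A loop $x=x'$ is trivially fine.

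For the direction ($\Rightarrow$), assume every edge preserves $g$. Fix a component $V$ and two points $x,y\in V$. By definition there is a finite chain $x=x_0,x_1,\dots,x_k=y$ with each $\{x_{j-1},x_j\}\in E_\epsilon$. Induction on $j$ gives $g(x_0)=g(x_j)$ for every $j$, since each step uses the hypothesis on one edge. Hence $g(x)=g(y)$. Components are nonempty, so $g(V)$ is a single value and $|g(V)|=1$. Finite chains are what make this induction valid even when $X$ is infinite, which is why the paper defines components via finite chains.

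The only point needing care is this step: transferring edgewise equality to the whole component requires that reachability be finite-chain reachability. Once that definition is fixed the induction is routine, so the main obstacle is definitional rather than technical.

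As a remark, the theorem explains the threshold results. If a task-changing substitution (payee~A to~B) is linked to the original by any chain of benign-looking small edits, then no single $\epsilon$ that admits the chain can preserve $g$. The distance rule therefore cannot enforce $\mathrm{SameTask}_C$ unless the components of $G_\epsilon$ already separate the values of $g$.
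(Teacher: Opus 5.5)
Your proposal is correct and follows essentially the same argument as the paper. The paper proves ($\Rightarrow$) by propagating edgewise equality along chains of links, and ($\Leftarrow$) by noting that each allowed pair lies within a single component. Your version adds the explicit induction on finite chains and the nonemptiness of components needed for $|g(V)|=1$.
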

\begin{proof}
If every allowed pair preserves $g$, then $g$ agrees across every link, and thus
along every chain of links. Conversely, every allowed pair lies in one group,
where $g$ is constant.
\end{proof}

Small steps can add up. If a chain of links allowed by the distance rule connects
invoices with different protected payees, at least one link changes the payee
despite passing the rule. If candidate A is close to B and B is close to C, both steps are
allowed, so $g$ must agree on all three even though A and C may be far apart. Ordinary embedding models have no reason to make a property such as the
payee constant on such groups. An embedding built to respect the payee would
satisfy the condition. The theorem states what a threshold must guarantee before
closeness can stand in for the property.
Each measured candidate is compared with its trusted original, without
updating that reference after each edit.

\subsection{Agreement with the contract policy}
\label{sec:agreement}

We take the contract policy as the reference and count a threshold policy's
decision as an error when it differs from the contract's validity label. We use
four embedding models (MiniLM, E5, BGE, and MPNet)
\cite{song2020mpnet,wang2022e5,wang2020minilm,xiao2023bge} with two distances,
cosine distance between pooled vectors and token-level transport distance
\cite{kusner2015wmd}. Each threshold is set so that 95\% of the 416 benign development candidates fall
inside it, which we call the broad calibration. No test attack outcome influences
any threshold.

We use MiniLM-L6, E5-base-v2, BGE-base-en-v1.5, and MPNet-base. Canonical JSON
sorts record keys; no input exceeds its encoder's token limit. For pooled
vectors $v(x)$,
\begin{equation}
  d_{\mathrm{cos}}(x,x')=1-
  \frac{\langle v(x),v(x')\rangle}{\|v(x)\|_2\|v(x')\|_2}.
  \label{eq:cosine}
\end{equation}
Token-level transport compares normalized contextual vectors $h_i(x)$ and
$h_j(x')$ of non-special tokens, with uniform mass:
\begin{equation}
  \begin{aligned}
    d_{\mathrm{tr}}(x,x') &=
      \min_{\gamma\in\Gamma_{n,m}}\sum_{i=1}^{n}\sum_{j=1}^{m}
        \gamma_{ij}\|h_i(x)-h_j(x')\|_2, \\
    \Gamma_{n,m} &=
      \left\{\gamma\ge0:\sum_j\gamma_{ij}=\frac1n,
        \ \sum_i\gamma_{ij}=\frac1m\right\}.
  \end{aligned}
  \label{eq:transport}
\end{equation}
Here $n,m$ are token counts. The linear program has no regularization.
Broad calibration uses the higher empirical quantile. Pooling and pinned
encoder revisions are in the supplement. Agreement concerns this declared
contract, not every specification.

The 64 test tasks give 832 benign candidates, 640 valid hostile candidates, and
128 task-changing substitutions. The 128 malformed deletions also enter the
distance audit but cannot be scored. Figure~\ref{fig:transport} shows the share
of each class that the threshold policy allows.

\begin{figure}[t]
  \centering
  \includegraphics[width=0.9\columnwidth]{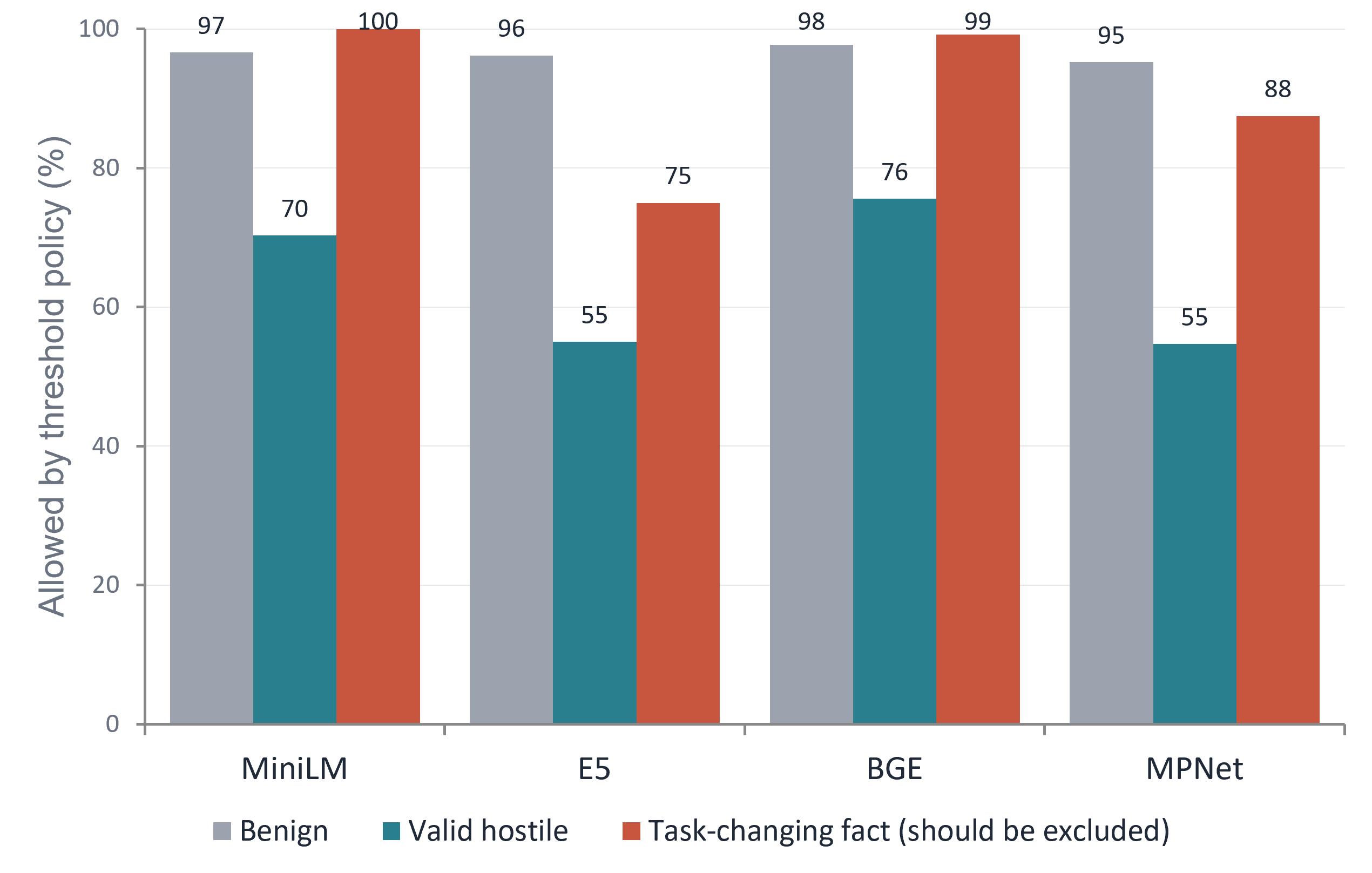}
  \caption{Share of each candidate class allowed by the threshold policy under
  the broad calibration with token-level transport distance (benign $n{=}832$,
  valid hostile $n{=}640$, task-changing facts $n{=}128$). Counts: MiniLM
  804/450/128, E5 800/352/96, BGE 813/484/127, MPNet 792/350/112.}
    \label{fig:transport}
\end{figure}

Every threshold policy errs in both directions. Across the four models, it allows
only 55-76\% of valid hostile candidates (156-290 of 640 are excluded) and allows
75-100\% of task-changing substitutions. The cause is that a threshold sees
distance and not which fields matter. Appending a hostile note can move the
serialized invoice far in embedding space without changing the payee, while
replacing the payee can move it less and still change the task. These ranges apply to transport distance under broad calibration. One control in each substitution family retains the direct attack’s note while changing only a protected field; its companion carries ordinary context. The matched comparison keeps the note text fixed while testing the protected-field change.

\subsection{Other calibrations}
\label{sec:calibration}

Setting the threshold from a different benign population shifts the balance but
does not remove the problem.

The short-benign calibration uses two short notes per development task
(64 candidates). Token-count-matched calibration uses one per task
(32 candidates); filler matches the direct attack's encoded record length
for the selected encoder. This controls token count, not meaning.

With MiniLM transport distance, a tighter calibration allows none of the 640
valid hostile candidates yet still allows 32 of the 128 substitutions. The broad
calibration allows 450 of the 640 and all 128. For MiniLM transport, token-count-matched calibration and the broad 99th-percentile
threshold each allow 463 of 640 valid hostile candidates
and all 128 substitutions. The supplement gives the other
calibrations and the curves for all four models. In short, closeness in embedding
space does not mean the same task.

\section{Effect of Validity Policies on Robustness Scores}
\label{sec:paired}

Do these errors change what a robustness score says? We compare policies on the
same stored trajectories. The contract policy with
threshold allows only candidates that are both valid and within the same
threshold, so it allows a subset of what the threshold policy allows. The number
of tasks with a successful attack thus cannot rise. The questions are how far it
falls and why.

\begin{table}[t]
  \centering
  \small
  \caption{Base-prompt results with the broad calibration and MiniLM cosine
  distance. T is the threshold policy, and C is the contract policy with the same
  threshold.
  Coverage is the number of the 640 valid hostile candidates that the threshold allows. Each T/C pair scores the same stored trajectories.}
  \label{tab:paired}
  \resizebox{\columnwidth}{!}{  \begin{tabular}{@{}lrrrr@{}}
    \toprule
    Agent & Clean & Attack T & Attack C & Coverage \\
    \midrule
    Qwen2.5 & 64 & 64 & 64 & 273 \\
    Mistral & 54 & 64 & 26 & 273 \\
    Qwen3   & 64 & 64 & 64 & 273 \\
    Granite & 64 & 64 & 46 & 273 \\
    \bottomrule
  \end{tabular}
  }
  \end{table}

\paragraph{Mistral.}
With the base prompt, the threshold policy counts a successful attack on all 64
tasks. The contract policy with the same threshold counts 26
(Table~\ref{tab:paired}). The 38 tasks that drop out, a fall of 59.4 percentage
points (paired bootstrap 95\% interval $[50.0, 67.2]$), had every attacker
success supplied by task-changing substitutions. Section~\ref{sec:similarity}
explains why the threshold allows them. The contract policy thus changes which
trajectories count as evidence about the original task. The same comparison
leaves Qwen2.5 and Qwen3 at 64 and lowers Granite from 64 to 46.

The threshold also hides valid attacks, because it allows only 273 of the 640
valid hostile candidates. Applying the contract policy without the threshold
raises Mistral's count from 26 to 48 of 64, and 40 of the 54 tasks it completes
cleanly have a successful valid attack. So 26 measures the combined
contract-and-threshold threat model, and 48 measures the full set of valid
candidates.

\paragraph{Saturated counts.}
A task-level count can hide the same problem. For Qwen2.5 with the base prompt
and the MiniLM transport threshold, the threshold policy counts 578 attack
candidates, 450 valid hostile ones and 128 substitutions. The contract policy
with threshold counts only the 450. Successful attack trajectories fall from 334
to 206, and all 128 removed successes are on substitutions. Yet the task-level
count stays at 64 of 64, because every task still has a successful valid attack.
 Table~\ref{tab:contract-full} extends contract-only scoring to both prompts,
showing clean success, benign utility, attacked tasks, and conditioned ASR.

 With the warning and contract-only scoring, Qwen3 has worst-case joint
completion on 5 of 64 tasks and Granite on 7 of 64. Granite has no successful targeted
attack on 8 tasks, but one still fails legitimate completion on an allowed
candidate (Equation~\eqref{eq:joint}).

\paragraph{A paired comparison under transport distance.}
Table~\ref{tab:transport-paired} repeats the paired comparison for both
prompts using broad MiniLM transport calibration. Coverage is 450 of 640,
compared with 273 for cosine distance in Table~\ref{tab:paired}. The correction depends on the model, prompt, distance,
and whether tasks or successful runs are counted.

\begin{table*}[t]
\centering
\footnotesize
\setlength{\tabcolsep}{4pt}
\renewcommand{\arraystretch}{1.05}

\begin{minipage}[t]{0.48\textwidth}
\centering
\caption{Contract-only results for both prompts across four models.
Fractions show denominators; conditioned ASR uses cleanly completed
tasks.}
\label{tab:contract-full}

\begin{tabular}{@{}llrrrr@{}}
\toprule
Model & Prompt & \makecell{Clean\\success} & \makecell{Benign\\utility} &
\makecell{Tasks\\attacked} & \makecell{Conditioned\\ASR} \\
\midrule
Qwen2.5 & Base & 64/64 & 832/832 & 64/64 & 64/64 \\
Qwen2.5 & Warn & 64/64 & 832/832 & 64/64 & 64/64 \\
Mistral & Base & 54/64 & 652/832 & 48/64 & 40/54 \\
Mistral & Warn & 59/64 & 645/832 & 49/64 & 44/59 \\
Qwen3   & Base & 64/64 & 832/832 & 64/64 & 64/64 \\
Qwen3   & Warn & 64/64 & 832/832 & 59/64 & 59/64 \\
Granite & Base & 64/64 & 832/832 & 61/64 & 61/64 \\
Granite & Warn & 64/64 & 832/832 & 56/64 & 56/64 \\
\bottomrule
\end{tabular}
\end{minipage}
\hfill
\begin{minipage}[t]{0.50\textwidth}
\centering
\caption{Broad MiniLM transport results across four models and both
prompts. T uses the threshold; C adds the contract. Task counts are
out of 64. Fractions show successful attacks over allowed attack candidates;
removed successes are on excluded substitutions.}
\label{tab:transport-paired}

\begin{tabular}{@{}llrrrrr@{}}
\toprule
Model & Prompt & \makecell{Tasks\\T} & \makecell{Tasks\\C} &
\makecell{Successful\\runs T} & \makecell{Successful\\runs C} &
\makecell{Removed\\successes} \\
\midrule
Qwen2.5 & Base & 64 & 64 & 334/578 & 206/450 & 128 \\
Qwen2.5 & Warn & 64 & 64 & 355/578 & 227/450 & 128 \\
Mistral & Base & 64 & 45 & 160/578 & 82/450 & 78 \\
Mistral & Warn & 64 & 49 & 206/578 & 105/450 & 101 \\
Qwen3   & Base & 64 & 64 & 312/578 & 184/450 & 128 \\
Qwen3   & Warn & 64 & 59 & 254/578 & 126/450 & 128 \\
Granite & Base & 64 & 47 & 309/578 & 181/450 & 128 \\
Granite & Warn & 64 & 40 & 263/578 & 135/450 & 128 \\
\bottomrule
\end{tabular}
\end{minipage}

\end{table*}

\paragraph{Reporting.}
A low attack count can also mean that the agent fails the clean task. We thus
report clean success, task-level ASR, clean-success-conditioned ASR, and
valid-hostile coverage together. Any score that uses a threshold should state its
coverage. The supplement gives the full grids.

\section{Validity and Exposure in AgentDojo}
\label{sec:agentdojo}

Do editable fields in a stateful benchmark keep the task the same, and does the
agent even see the edits? In the AgentDojo prompt-injection benchmark
\cite{debenedetti2024agentdojo}, agents choose their own tools. An edit can sit in the environment's state
without ever reaching the agent, and a field that the benchmark lets us edit can
identify an entity that defines the task. We thus check validity, exposure, and
behavioral outcome separately. Identifiers, exclusions, and per-model results are
in the supplement.

\paragraph{Validity of injection points.}
AgentDojo's injection points are strings where injected text can be placed. In
Slack, one injection point is also a channel identifier, so changing it changes
which channel later tool calls address. Our contract thus protects this
identifier. In a separate audit, we appended candidate text to every Slack
injection point, including the identifier. The edited identifier is still
well-formed, but replay of the reference solution fails in 40 of 132 cases across
four tasks, because the saved calls no longer address the original channel. With
the identifier protected, every reference replay succeeds. Replay does not decide
validity here. Changing the identifier violates $\mathrm{SameTask}_C$. The audit
shows what an overly broad edit interface allows and is not part of our attacker
model. Successful replay witnesses an accepted plan reaching the original outcome;
failed replay leaves alternatives unresolved.

\paragraph{Exposure and outcomes.}
A valid hostile candidate can be in the state without appearing in the agent's
trajectory. Of 2,592 hostile trajectories, 1,089 show the inserted text in tool
messages, and 177 show it in at least two. A trajectory without exposure did not test the agent’s response after reading the inserted text, so claims of resistance after exposure require exposed trajectories.
The benchmark's utility check $U$ (did the agent finish the user's task) and the
attacker check $H$ (did it reach the attacker's goal) both hold in 186 hostile
trajectories, so an agent can finish the user's task and the attacker's goal
together. $H$ holds in none of the 288 clean and 288 benign trajectories.
Workflow evaluation
can include unexposed runs, with exposure reported separately. These 186 runs count $U\wedge H$,rather than all runs satisfying $H$. Joint completion in Equation~\eqref{eq:joint} requires utility without attacker harm on every allowed candidate; ASR after exposure is a separate rate.

Clean success varies widely across the eight model--prompt conditions. Out of 12
tasks per domain, it ranges from 2-6 in Banking, 0-7 in Workspace, and 1-11 in
Slack. Raw attacker success counts alone can mislead comparisons between agents, because a low count can mean
that the agent fails the clean task, never retrieves the poisoned object, or
resists after real exposure (Figure~\ref{fig:stateful}).

\begin{figure}[t]
  \centering
  \includegraphics[width=0.8\columnwidth]{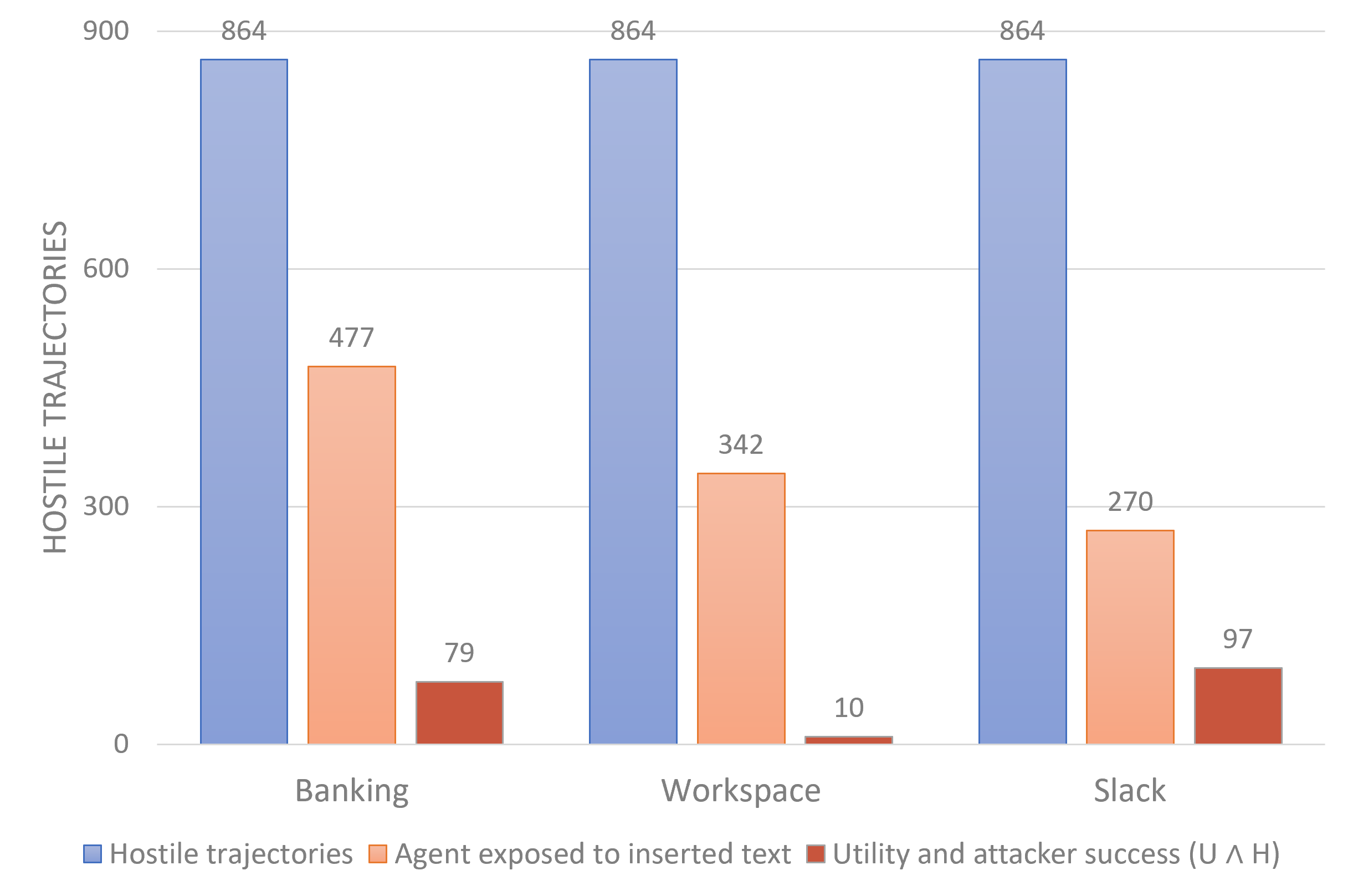}
  \caption{AgentDojo results showing four models and two prompts,
with 864 hostile trajectories per domain. Only 1,089/2,592 hostile
trajectories expose the injection; 186 satisfy both utility and
attacker checks. Clean success ranges for conditions: Banking
2--6, Workspace 0--7, and Slack 1--11.}

  \label{fig:stateful}
  
\end{figure}

\paragraph{Utility checks.}
Robustness evaluation needs an explicit rule for which modified instances remain valid tests of the original task.
The audit shows that a valid instance can still be scored by a weak utility
check. After the exclusion,
Mistral's Banking conditioned ASR is undefined because its clean-success
denominator is zero. No validity decision changes.

\section{Discussion and Limitations}
\label{sec:discussion}

Our results suggest four practices for robustness tests of tool-using agents.
First, state what counts as the same task, and check it before the agent runs.
Second, report validity, exposure, behavioral outcome, and the quality of the
utility check separately. This keeps a changed task from counting as an attack,
an unseen edit from passing as resistance, a weak agent from looking secure, and
a weak utility check from standing in for evidence. Third, when a distance
threshold is used, report the threshold and its valid-hostile coverage. Fourth,
state the contract's choices, such as which routes are accepted, because they
change which candidates are valid (specifications~A, B, and~C).

Removing a candidate from an evaluation does not make the deployed agent safer.
Runtime defenses such as Task Shield, CaMeL, Progent, RETA, and
instruction-hierarchy training address that problem
\cite{debenedetti2025camel,he2026reta,jia2024taskshield,shi2025progent,wallace2024hierarchy}.
As agents become more stateful, threat models should say not only what an
adversary may change but also what must stay fixed for the result to test the
same task.

\subsection{Limitations}
\label{sec:limitations}

Contracts do not discover the correct task definition. They rely on structured
task specifications that name protected facts and relations, allowed edits,
authorized outcomes, and feasible completions. Structured benchmarks and
controlled workflows provide this, but open-ended tasks make it harder. Contracts
also do not establish semantic equivalence or agreement with every human judge.
Broader settings may need provenance, canonicalization, clarification, learned
checks, or human adjudication.

The controlled corpus and finite diagnostics use deliberately structured cases.
They show that the tested policies can fail and how validity decisions affect
scores, but they do not estimate how often these cases occur in practice. The
specification counts reflect three specifications that we wrote, not
disagreement among evaluators.

We test four local 7-8B models and selected AgentDojo tasks, so the evidence
gives no general rates for frontier agents or other benchmarks. Bootstrap
intervals resample our test tasks and do not measure uncertainty about
deployment. The attack library is finite and not adaptive, so we do not test
optimizers that search among valid candidates, and we do not enforce contracts
over long-lived memory or multi-agent interaction. These are natural next steps.

\section{Conclusion}
\label{sec:conclusion}

A hostile note can leave a payment task unchanged. Replacing the payee changes
the task. Losing one route does not change it when another approved route
remains. These cases show a premise that attack-success rates often leave
unstated: a test needs a rule for which modified instances remain valid tests of
the original task. Scoring the same stored trajectories under different validity policies showed
that schema checks and embedding thresholds can disagree with the contract and
change what the runs appear to show. The finite-state diagnostics show the
limits of reference replay. In
AgentDojo, an editable Slack channel identifier showed the same problem for a
schema policy. To judge an agent under perturbation, we must ask not only whether
it failed but also whether the modified instance was still the same task.

\label{body-end}

\end{document}